\documentclass[11pt]{article}
\usepackage[a4paper,margin=1in]{geometry}
\usepackage{amsmath,amssymb,amsthm,mathtools}
\usepackage{algorithm}
\usepackage{algpseudocode}
\usepackage{tikz}
\usetikzlibrary{arrows.meta,positioning,calc,fit,backgrounds}
\usepackage{booktabs}
\usepackage{microtype}
\usepackage{enumitem}
\usepackage{hyperref}
\usepackage{xcolor}
\usepackage{caption}
\usepackage{float}
\usepackage{cleveref}
\hypersetup{colorlinks=true,linkcolor=black,citecolor=black,urlcolor=blue}

\newtheorem{theorem}{Theorem}

\newcommand{\Sset}{\mathcal{S}_t}
\newcommand{\Uset}{\mathcal{U}_t}
\newcommand{\Fset}{\mathcal{F}_t}
\newcommand{\Iset}{\mathcal{I}_t}
\newcommand{\PathU}{\mathcal{P}(\mathcal{U}_t)}

\title{\textbf{Dijkstra Is NOT Greedy: A Global-to-Local Proof of Correctness}}
\author{Shengbao Wang\\
\small Hangzhou Normal University, China \\
\small shengbaowang@hznu.edu.cn}
\date{October 13, 2026}

\begin{document}
\maketitle
\vspace{-2.5em}

\begin{abstract}
Dijkstra's algorithm is traditionally classified as a greedy algorithm. This paper argues that this classification misrepresents the mechanism by which the algorithm determines shortest paths. We present a direct global-to-local proof of correctness. At each iteration, the algorithm identifies a shortest path over the entire set of paths whose endpoints remain unsolved. This global minimum is necessarily a shortest path to its own endpoint. More fundamentally, the algorithm uses already determined shortest paths to establish one additional shortest path at each iteration. We identify this progression as dynamic programming. A three-step exclusion proof shows how the full path space is reduced to frontier candidates, how relaxation removes inferior alternatives, and why minimum extraction yields the global minimum. The choice is thus determined by a global comparison, not by an optimistic local greedy choice. We conclude that the algorithm is more accurately characterized by dynamic programming and deterministic path exclusion than by the traditional greedy classification.
\end{abstract}

\section{Introduction}
Dijkstra's shortest-path algorithm, introduced in 1959 \cite{Dijkstra1959}, is a classic of computer science.
It has important practical applications, including route planning in map navigation and routing in communication networks \cite{SedgewickWayne2011}. It has also long been regarded as a standard greedy algorithm. This classification is prominent in classical textbooks \cite{BrassardBratley1988,KleinbergTardos2006,CLRS2022}. Kleinberg and Tardos describe greedy decisions as being made \emph{myopically}, with each step optimizing a local criterion. Cormen et al. characterize Dijkstra as greedy because it selects the unsettled vertex with the smallest tentative distance. Their correctness proof then establishes the finality of the selected vertex's distance. Sedgewick and Wayne present the same minimum-distance and relaxation operations, while Princeton course materials likewise place shortest paths among greedy algorithms \cite{SedgewickWayne2011,WayneCOS423}. Roughgarden offers a particularly clear example: a minimum boundary candidate is selected, and its optimality for the chosen endpoint is subsequently proved \cite{Roughgarden2018}.

These proofs are rigorous. They also employ properties of paths that cross the boundary of the settled set. Their principal conclusion, however, concerns the shortest-path distance of the \emph{selected endpoint}. The choice is stated before its correctness is established. This order supports the customary picture of a local greedy decision followed by a proof of its safety. It does not expose the stronger fact that the selected candidate is the minimum over \emph{all paths to currently unsolved destinations}.

Alternative accounts of Dijkstra's algorithm have also appeared. Building on the optimal-substructure perspective familiar from dynamic programming \cite{Bellman1957,Lawler1976}, Sniedovich emphasizes its dynamic-programming nature, while Misra derives it through fixed-point computation \cite{Sniedovich2006,Misra2001}. Sniedovich nevertheless discusses the algorithm's customary greedy selection rule, and Lew, in examining the connection between the two paradigms, continues to treat Dijkstra as greedy \cite{Lew2006}. These works do not reject the greedy characterization by tracing the selection rule back to a global minimum over \emph{all} paths ending at unsolved vertices, nor do they develop the global-to-local exclusion argument presented here.

This paper takes that global path space as its starting point. We prove that every iteration identifies its absolute shortest path. The shortest path to the selected endpoint then follows directly from this global result. To establish this claim, we give a three-step exclusion proof. The first step restricts the search to the frontier. The second uses optimal substructure and relaxation to retain the best candidate for each frontier vertex. The third determines the minimum over the remaining candidates. These reductions correspond to the operations of Dijkstra's algorithm itself.

The resulting interpretation reverses the familiar explanatory order. The algorithm does not reach a global conclusion by trusting a locally attractive choice. Instead, deterministic exclusions establish the global minimum before its endpoint is declared solved. Each iteration uses previously established shortest paths to obtain a new one. We therefore argue that Dijkstra's algorithm is fundamentally a dynamic-programming algorithm, and that its customary classification as greedy is incorrect. This claim concerns the mathematical basis of its minimum selection, not the validity of earlier correctness proofs.

\section{Preliminaries}
\label{sec:prelim}
Let $G=(V,E,w)$ be a finite directed weighted graph with source $s\in V$. To make the exclusion argument especially transparent, we assume strictly positive edge weights throughout:
\[
  w(e)>0\qquad (e\in E).
\]
Strict positivity makes every proper path extension longer. Dijkstra also works with zero-weight edges: extensions then cannot become shorter, and the first-entry frontier still contains a path attaining the current global minimum. Equal distances may arise along a shortest path, so its vertices need not be settled in strict prefix order. Negative weights do not preserve this argument. We use positive weights to keep the comparisons strict.

By a path we mean a sequence $P=(s=v_0,v_1,\ldots,v_k=v)$, starting at the fixed source $s$, passing through intermediate vertices $v_1,\ldots,v_{k-1}$, and ending at $v$. Throughout, paths are \emph{simple}: no vertex occurs more than once. This entails no loss for shortest paths under positive weights, because removing a cycle from any walk strictly reduces its length. Let $w(P)$ be the sum of the edge weights of $P$, and let $\delta(s,v)$ be the shortest-path distance to a vertex $v$ reachable from $s$. Unreachable vertices are excluded from finite-distance minimizations; if no unsolved vertex is reachable, the procedure terminates.

\subsection{The Three Vertex Sets}
Dijkstra already divided vertices into three classes in Problem 2 of his original paper \cite{Dijkstra1959}. Set A contains vertices with known shortest paths. Set B contains the remaining vertices directly connected to A; the next vertex is chosen from this set. Set C contains all other vertices. We use different symbols but retain this three-region structure.

At iteration $t$, let $\Sset$ be the \emph{solved set}, for which
\[
 d(x)=\delta(s,x)\qquad (x\in\Sset),
\]
and let $\Uset=V\setminus\Sset$ be the \emph{unsolved set}. The initially solved set is $\{s\}$, with $d(s)=0$. Within $\Uset$, define
\[
\Fset=\{f\in\Uset:\exists x\in\Sset\text{ such that }(x,f)\in E\}
\]
as the \emph{frontier} and $\Iset=\Uset\setminus\Fset$ as the \emph{invisible set}. These are disjoint partitions:
\[
 V=\Sset\;\dot\cup\;\Uset,
 \qquad \Uset=\Fset\;\dot\cup\;\Iset.
\]
Here ``invisible'' means only that the vertex is not directly exposed by an outgoing edge from the current solved set. Figure~\ref{fig:spatial} illustrates Dijkstra's original three-region partition in our notation. For directed graphs, the frontier uses edges directed \emph{out of} the solved region.

\begin{figure}[H]
\centering
\begin{tikzpicture}[
    >=Latex,
    vertex/.style={circle,draw,minimum size=7mm,inner sep=1pt},
    every node/.style={font=\small}
]
\draw[thick,rounded corners] (-0.4,-2.75) rectangle (11.1,2.75);
\node[anchor=north west] at (-0.25,2.60) {$V$};

\draw[thick,rounded corners] (0.0,-2.10) rectangle (3.75,2.10);
\node at (1.88,1.78) {$\mathcal{S}_t$ (Solved)};

\draw[thick,rounded corners] (4.15,-2.10) rectangle (10.70,2.10);
\node at (7.43,1.78) {$\mathcal{U}_t$ (Unsolved)};

\draw[dashed,rounded corners] (4.45,-1.52) rectangle (6.40,1.35);
\node at (5.43,1.07) {$\mathcal{F}_t$};

\draw[dashed,rounded corners] (6.80,-1.52) rectangle (10.30,1.35);
\node at (8.55,1.07) {$\mathcal{I}_t$};

\node[vertex] (s) at (0.60,0.05) {$s$};
\node[vertex] (a) at (1.70,0.88) {$a$};
\node[vertex] (b) at (1.75,-0.88) {$b$};
\node[vertex] (x) at (3.00,0.05) {$x$};

\draw[->,thick] (s) -- (a);
\draw[->,thick] (a) -- (x);
\draw[->,thick] (s) -- (b);
\draw[->,thick] (b) -- (x);

\node[vertex] (f1) at (5.05,0.50) {$f_1$};
\node[vertex] (f2) at (5.70,-0.70) {$f_2$};
\draw[->,very thick] (x) -- (f1);
\draw[->,very thick] (b) -- (f2);

\node[vertex] (u1) at (7.70,0.50) {$u_1$};
\node[vertex] (u2) at (8.75,-0.45) {$u_2$};
\node[vertex] (u3) at (9.45,0.62) {$u_3$};
\draw[->,dashed] (f1) -- (u1);
\draw[->,dashed] (u1) -- (u3);
\draw[->,dashed] (f2) -- (u2);
\draw[->,dashed] (u2) -- (u3);

\node[font=\scriptsize,fill=white,inner sep=1pt] at (6.38,0.72) {$\times$ longer path};

\draw[densely dotted,thick] (3.95,-2.28) -- (3.95,2.28);
\node[rotate=90,font=\scriptsize] at (3.89,0) {first-entry boundary};

\end{tikzpicture}
\caption{Dijkstra's three-way partition at iteration $t$, in our notation. Every path from $s$ to an unsolved vertex first enters the frontier $\mathcal{F}_t$. With positive edge weights, any further extension is longer than this first-entry prefix.}
\label{fig:spatial}
\end{figure}

\subsection{A Motivating First Step}
The source $s$ is trivially solved, since the length-zero path to itself has distance $\delta(s,s)=0$; this is an initial fact, not a search step. Dijkstra's first actual step is to solve the shortest-path problem for some vertex other than $s$. From the algorithm's minimum-label rule, it is easy to see that this vertex is the head of a minimum-weight edge from $s$ to a different vertex. If another vertex is reachable, let
\[
  m=\min_{\substack{(s,v)\in E\\v\ne s}}w(s,v).
\]
Each such one-edge path has length $m$. Any simple path from $s$ to another vertex with at least two edges starts with an edge of weight at least $m$ and then adds positive-weight edges. Thus a minimum-weight outgoing edge is not merely shortest to its own head: it is an absolute shortest path among all paths from $s$ to other vertices.

The reason is elementary: extending a path makes it longer. This first step suggests the question at the heart of the paper: \emph{Does every later iteration likewise identify an absolute shortest path among all paths to the vertices still unsolved?} The following global-to-local proof answers yes.

\section{A Global-to-Local Proof by Three-Step Exclusion}
\label{sec:principle}
\label{sec:proof}
\subsection{The Global Path Minimum}
For each $u\in\Uset$, let $\mathcal P(s,u)$ be the set of paths from $s$ to $u$. Define the global search space at iteration $t$ by
\[
  \PathU=\bigcup_{u\in\Uset}\mathcal P(s,u).
\]
The objective of the iteration is to identify
\[
  P_t^*\in\arg\min_{P\in\PathU}w(P).
\]
The minimum is taken over \emph{all paths ending at unsolved vertices}, not over a collection of shortest paths that are assumed to have already been found.

\paragraph{From the global minimum to a local shortest path.}
Suppose $P_t^*$ ends at $u^*$. Since
\[
  \mathcal P(s,u^*)\subseteq\PathU,
\]
every path $Q$ from $s$ to $u^*$ satisfies
\[
  w(P_t^*)\le w(Q).
\]
Thus a path that minimizes the global search space necessarily minimizes the path set for its own endpoint. \emph{One global shortest path therefore settles one additional vertex.} The remaining task is to show how Dijkstra determines this global minimum without enumerating all paths.

\paragraph{The exclusion principle.}
To find a shortest path, one may exclude any path strictly longer than another admissible candidate. If $P_1$ is admissible and $w(P_1)<w(P_2)$, then $P_2$ cannot attain the current minimum. Shortest paths need not be unique, even with positive weights. Throughout, ``a global shortest path'' means any one minimizer; if several optimal candidates have equal length, whether they end at the same vertex or at different vertices, any one of them may be chosen without affecting the shortest distance.

This exclusion refers to the \emph{current iteration}. A path extended beyond the first unsolved vertex may be irrelevant to the present global minimum, but its later edges can become relevant after that vertex is settled. The following proof uses this distinction to reduce the current search space without losing any future shortest-path solution.

\subsection{The Three-Step Exclusion Proof}
Fix iteration $t$ with a nonempty reachable frontier. Distances $d(x)=\delta(s,x)$ for all $x\in\Sset$ are already finalized. We prove that Dijkstra's next extraction realizes the global minimum in $\PathU$.

\paragraph{Step 1: Frontier exclusion.}
Take an arbitrary path
\[
  P=(s=v_0,v_1,\ldots,v_k=u)\in\PathU.
\]
Let $v_j$ be its first unsolved vertex. Since $v_0=s\in\Sset$, its predecessor $v_{j-1}$ belongs to $\Sset$. The edge $(v_{j-1},v_j)\in E$ therefore places $v_j$ in the frontier $\Fset$ by definition. Its prefix $P^{\mathrm{first}}=(v_0,\ldots,v_j)$ also belongs to $\PathU$. If $P$ continues past $v_j$, it must acquire at least one more positive-weight edge, so
\[
  w(P^{\mathrm{first}})<w(P).
\]
Thus no path containing more than one unsolved vertex can be the current global minimum: a shorter path already ends at its first frontier vertex. What remains relevant at this iteration is the form
\[
  s\rightsquigarrow x\to f,
  \qquad x\in\Sset,\quad f\in\Fset,
\]
with every internal vertex in the solved set and exactly one unsolved vertex, its endpoint $f$. This is the mathematical reason why it suffices to consider frontier vertices rather than paths going further into the unsolved region. Here exclusion is relative to the shorter first-entry prefix: a postponed path need not be longer than every other surviving frontier candidate.

\paragraph{Step 2: Relaxation exclusion.}
Fix $f\in\Fset$. Every candidate surviving Step 1 is of the form
\[
  P=Q(s,x)\circ(x,f),\qquad x\in\Sset.
\]
First consider the prefix to $x$. Since $x$ has already been solved, an arbitrary prefix $Q(s,x)$ cannot improve on $d(x)=\delta(s,x)$. If it is strictly longer, replacing it by a finalized shortest prefix produces a strictly shorter candidate to $f$. We therefore need only the candidate value
\[
  d(x)+w(x,f)
\]
for each boundary edge $(x,f)$. This is the optimal-substructure reason why relaxation extends finalized shortest distances, rather than enumerating earlier paths.

Several solved predecessors may lead to the same frontier vertex $f$. The shortest candidate among them is
\[
  D_t(f)=
  \min_{\substack{x\in\Sset\\(x,f)\in E}}
  \{d(x)+w(x,f)\}.
  \label{eq:frontier}
\]
All longer candidates to this same endpoint can now be excluded; if several have the same minimum length, retaining one is sufficient. Thus Step 2 comprises both optimal-prefix replacement and same-endpoint minimization. Both reductions are reflected in the relaxation operation. After Step 2, the current search is represented by the frontier values
\[
  \{D_t(f):f\in\Fset\}.
\]

\paragraph{Step 3: Global minimum extraction.}
Choose
\[
 f^*\in\arg\min_{f\in\Fset}D_t(f),
 \qquad D^*=D_t(f^*).
\]
We show that this is the minimum not only among frontier representatives but among \emph{all} paths in $\PathU$.

Take any $P\in\PathU$. Let $f$ be its first unsolved vertex, $x$ its predecessor, and $L_x$ the length of the prefix from $s$ to $x$. As $x\in\Sset$, $d(x)\le L_x$. Since $D_t(f)$ is the least candidate over all solved-to-$f$ boundary edges,
\[
 D_t(f)\le d(x)+w(x,f)\le L_x+w(x,f).
\]
The remainder of $P$ after $f$ has nonnegative length, so $w(P)\ge L_x+w(x,f)$. By the choice of $f^*$, $D_t(f)\ge D_t(f^*)$. Combining gives
\[
 \begin{aligned}
 w(P)&\ge L_x+w(x,f)\\
     &\ge d(x)+w(x,f)\\
     &\ge D_t(f)\\
     &\ge D_t(f^*)=D^*.
 \end{aligned}
\]
A path attaining $D^*$ exists: choose a finalized shortest prefix to the corresponding solved predecessor and append its boundary edge. Such a prefix may be chosen entirely within the solved set (by induction on earlier extractions), so its extension to an unsolved endpoint is simple. Therefore
\[
  D^*=\min_{P\in\PathU} w(P)
     =\min_{f\in\Fset}D_t(f).
\]
This equality also shows why the eventual minimum-label operation implements a global path minimum: Steps 1 and 2 have already reduced the path space without changing its least value.

\begin{theorem}[Correctness of one iteration]\label{thm:iteration}
If all edge weights are positive and at least one unsolved vertex is reachable, then any $f^*$ minimizing $D_t(f)$ over $\Fset$ satisfies
\[
  D_t(f^*)=\delta(s,f^*).
\]
Hence its shortest-path distance can be finalized.
\end{theorem}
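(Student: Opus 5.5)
The proof will combine the equality established in Step~3 with the observation, made earlier under ``From the global minimum to a local shortest path,'' that a global minimizer is automatically a minimizer for its own endpoint. Throughout, the standing hypotheses of the iteration are assumed: $d(x)=\delta(s,x)$ for every $x\in\Sset$, every edge weight is strictly positive, and some unsolved vertex is reachable from $s$. The last hypothesis ensures the frontier is nonempty. Indeed, a path to a reachable unsolved vertex has a first unsolved vertex, and by Step~1 that vertex lies in $\Fset$. Consequently the minimum of $D_t$ over $\Fset$ is finite and $f^*$ is well defined.

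I will prove two inequalities.

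For the upper bound $\delta(s,f^*)\le D_t(f^*)$, choose $x\in\Sset$ with $(x,f^*)\in E$ attaining the minimum in the definition of $D_t(f^*)$. By the induction hypothesis, $x$ has a shortest path of length $d(x)$ whose vertices all lie in $\Sset$; this is the construction already cited at the end of Step~3. Appending the edge $(x,f^*)$ gives a simple path from $s$ to $f^*$ of length $d(x)+w(x,f^*)=D_t(f^*)$. Hence $\delta(s,f^*)\le D_t(f^*)$.

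For the lower bound $\delta(s,f^*)\ge D_t(f^*)$, let $Q$ be any simple path from $s$ to $f^*$. Since $f^*\in\Uset$, we have $Q\in\mathcal P(s,f^*)\subseteq\PathU$. The chain of inequalities in Step~3 then gives $w(Q)\ge D^*=D_t(f^*)$. Taking the minimum over all such $Q$ yields $\delta(s,f^*)\ge D_t(f^*)$. It is enough to minimize over simple paths, because under positive weights any walk contains a simple path that is no longer.

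The main point requiring care is the simplicity of the witness path in the upper bound, that is, that the prefix to $x$ can be taken inside $\Sset$. I would handle this by strengthening the invariant carried through the induction on iterations. The strengthened invariant states that each solved vertex $x$ has a shortest path all of whose vertices were solved before or together with $x$. This holds for $s$, whose path is trivial. It is preserved when a new vertex is settled, because the new witness is an old witness followed by the edge $(x,f^*)$. With this in place, appending $(x,f^*)$ where $f^*\notin\Sset$ cannot repeat a vertex, and combining the two inequalities gives $D_t(f^*)=\delta(s,f^*)$, which is the statement that the distance of $f^*$ can be finalized.
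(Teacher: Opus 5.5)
Your proposal is correct and follows the paper's proof. The lower bound is the Step~3 inequality chain restricted to $\mathcal P(s,f^*)\subseteq\PathU$, and the upper bound is the paper's witness path (a finalized solved prefix plus the boundary edge); your strengthened invariant just makes explicit the paper's parenthetical ``by induction on earlier extractions'' that keeps this witness simple. That invariant is optional: the fact you already cite, that under positive weights any walk contains a no-longer simple path, would give the upper bound directly.
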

\begin{proof}
By the preceding inequalities, a path attaining $D_t(f^*)$ is globally shortest in $\PathU$. In particular, it is shortest among the paths to its own endpoint $f^*$, which form a subset of $\PathU$.
\end{proof}

The solved state then grows and the unsolved endpoint set shrinks:
\[
 \mathcal S_{t+1}=\Sset\cup\{f^*\},\qquad
 \mathcal U_{t+1}=\Uset\setminus\{f^*\}.
\]
The same reasoning applies at the next iteration.

\subsection{Correspondence with the standard algorithm}
\label{sec:correspondence}
The three exclusions are not additional operations. They account for the operations already performed by the standard algorithm, as summarized in Table~\ref{tab:correspondence}. Frontier restriction implements Step 1; relaxation maintains the reduced candidate values of Step 2; and \textsc{Extract-Min} implements Step 3.

\begin{table}[htbp]
\centering
\caption{Exclusion steps and corresponding operations in Dijkstra's algorithm.}
\label{tab:correspondence}
\small
\begin{tabular}{@{}p{0.24\linewidth}p{0.29\linewidth}p{0.39\linewidth}@{}}
\toprule
Exclusion step & Standard operation & Mathematical role \\
\midrule
Step 1: Frontier exclusion & Frontier restriction & Omit paths that continue beyond their first unsolved vertex. \\
Step 2: Relaxation exclusion & Relaxation of tentative distances & Use optimal solved prefixes and retain the best candidate for each frontier endpoint. \\
Step 3: Global minimum & \textsc{Extract-Min} & Select the global minimum among the surviving candidates. \\
\bottomrule
\end{tabular}
\end{table}

To make the correspondence between the three-step exclusion proof and the original algorithm more explicit, Algorithm~\ref{alg:standard} presents a compact version of Dijkstra's procedure, with the relevant operations annotated. It starts with the source $s$ in the solved set and immediately relaxes its outgoing edges. The first minimum extraction therefore uses frontier values that have already been computed. In later iterations, the relaxations following an extraction update the frontier values for the next extraction. Hence, although the loop places \textsc{Extract-Min} before relaxation, the state used by each extraction already reflects Step~2. The two edge scans make Step~1 explicit: only edges from the solved region to unsettled heads are examined, so only frontier vertices can receive a finite candidate.

\begin{algorithm}[htbp]
\caption{Dijkstra's algorithm with the source settled initially}\label{alg:standard}
\begin{algorithmic}[1]
\State $d(s)\gets 0$; $d(v)\gets\infty$ for $v\ne s$
\State $\mathcal S\gets\{s\}$; $Q\gets V\setminus\{s\}$ (queue keyed by $d$)
\For{each $(s,v)\in E$ with $v\in Q$} \Comment{Step 1: frontier heads only}
  \State $d(v)\gets\min\{d(v),w(s,v)\}$; update $v$'s key in $Q$ \Comment{Step 2: initial relaxation}
\EndFor
\While{$Q\ne\varnothing$}
  \State $u\gets\operatorname{ExtractMin}(Q)$ \Comment{Step 3: frontier minimum}
  \If{$d(u)=\infty$} \State \textbf{break} \EndIf
  \State $\mathcal S\gets\mathcal S\cup\{u\}$
  \For{each $(u,v)\in E$ with $v\in Q$} \Comment{Step 1: frontier heads only}
    \If{$d(u)+w(u,v)<d(v)$}
      \State $d(v)\gets d(u)+w(u,v)$; update $v$'s key in $Q$ \Comment{Step 2: relaxation}
    \EndIf
  \EndFor
\EndWhile
\end{algorithmic}
\end{algorithm}

Once a vertex $u$ is settled, its distance $d(u)=\delta(s,u)$ is final, so every outgoing edge $(u,v)$ supplies its best possible extension value $d(u)+w(u,v)$ when scanned. In a standard adjacency-list procedure, each outgoing edge of a reachable vertex is examined at most once, and a relaxation is attempted only if its endpoint remains unsettled; a frontier vertex's tentative label may nevertheless decrease several times. In contrast, standard Bellman--Ford repeatedly scans edges because the distance of an edge's tail may still improve. This difference helps explain Dijkstra's efficiency.

Dijkstra's original account already uses three vertex classes and a sequence of candidate updates followed by minimum selection \cite{Dijkstra1959}. The present proof identifies the global path-minimization argument behind that established process. It is a proof of the original algorithm, not a modified algorithm.

\section{Discussion: Dynamic Programming Rather Than Greedy Selection}
\label{sec:discussion}
\subsection{The dynamic-programming recurrence}
Equation~\eqref{eq:frontier} computes each frontier candidate from finalized shortest distances. Once the global frontier minimum is settled, that distance becomes available for subsequent iterations. Hence shortest paths are built incrementally from previously solved subproblems. If $L_t=\min_{P\in\mathcal P(\mathcal U_t)}w(P)$, then
$\mathcal P(\mathcal U_{t+1})\subseteq\mathcal P(\mathcal U_t)$ gives $L_{t+1}\ge L_t$: shortest paths are settled in nondecreasing order of length, even though distinct endpoints may have equal distances. Under positive weights, every proper source-prefix of a shortest path is strictly shorter and is therefore settled earlier. Discovery order need not agree with this order: a vertex may enter the frontier earlier but be settled later.

This recurrence may be compared with Fibonacci computation,
\[
 F_0=0,\qquad F_1=1,\qquad F_n=F_{n-1}+F_{n-2}.
\]
Both compute new exact results from solved subproblems. The next Fibonacci index is predetermined; Dijkstra determines the next endpoint from the current frontier. The common feature is reliance on established subproblem solutions, not on a guessed next result.

\subsection{Why the greedy classification is incorrect}
A greedy rule, in its basic sense, commits to the best immediately available local choice without accounting for the entire remaining problem; Kleinberg and Tardos describe this viewpoint as \emph{myopic} \cite{KleinbergTardos2006}. Dijkstra's \textsc{Extract-Min} looks exactly like such a rule in pseudocode. The proof above, however, establishes the opposite logical direction: the minimum extracted from frontier labels has already been shown to equal the minimum over \emph{all} paths ending at unsolved vertices. The apparent local choice is a compact implementation of that global minimization, obtained through deterministic exclusions rather than a myopic decision. Together with the recurrence on solved shortest paths, this is why we regard Dijkstra as dynamic programming, not as a greedy algorithm.

\section{Conclusion}
We have given a global-to-local proof of Dijkstra's algorithm. Three deterministic exclusions reduce the paths to unsolved destinations to frontier candidates, then to one optimal candidate per frontier vertex, and finally to the global shortest path. Its endpoint's shortest-path optimality follows directly from that global minimum. The same argument also clarifies the role of positive edge weights, the order of finalized paths, and why each outgoing edge need only be processed once.

Each iteration uses previously established shortest paths to determine one new shortest path. This is a dynamic-programming process, not a speculative local greedy choice. The result places Dijkstra alongside Bellman--Ford and Floyd--Warshall, which are already widely understood as dynamic-programming shortest-path algorithms \cite{CLRS2022}. It thus provides a more unified conceptual classification of these major classical shortest-path algorithms.

\section*{Acknowledgments}
The author thanks Zimo Wang for helpful discussions that made the central ideas of this work clearer. The author also acknowledges the assistance of AI large language models, specifically Gemini and ChatGPT, in drafting, translation, and editorial refinement of the manuscript.

\end{document}